

\documentclass[sigconf]{acmart}
\citestyle{acmauthoryear}

\usepackage[ruled]{algorithm2e} 

\usepackage{graphicx}
\usepackage{geometry}
\usepackage[utf8]{inputenc} 
\usepackage[T1]{fontenc}    
\usepackage{amsfonts}       
\usepackage{nicefrac}       
\usepackage{microtype}      
\usepackage{subfig}
\acmConference[Preprint]{Preprint}{\today}{}

\begin{document}

\settopmatter{printacmref=false} 
\renewcommand\footnotetextcopyrightpermission[1]{} 
\pagestyle{plain} 
\title{Cost Per Action Constrained Auctions}
\author{Benjamin Heymann}

\def \b {{{b^i}}}
\def \bb {{{\hat{b}}}}
\newcommand{\E}{\mathbb{E}}
\newcommand {\R}{\mathbb{R}}
\newcommand {\T}{{\rm T}}
\newcommand{\dd}{{\rm d}}
\newcommand{\rev}{{\mathcal{R}}}
\newcommand{\cost}{{\mathcal{C}}}
\newtheorem{remark}{Remark}
\begin{abstract}
A standard result from auction theory is that bidding truthfully in a second price auction is a weakly dominant strategy. 
The result, however, does not apply in the presence of Cost Per Action (CPA) constraints.
Such constraints exist, for instance, in digital advertising, as
some buyer may try to maximize the total number of clicks while keeping the empirical Cost Per Click (CPC) below a threshold. 
More generally the CPA constraint implies that the buyer has a maximal average cost per unit of value in mind.

We discuss how such constraints change some traditional results from auction theory.

Following the usual textbook narrative on auction theory, we focus specifically on the symmetric setting,  
We formalize the notion of CPA constrained auctions and derive a Nash equilibrium for second price auctions. We then extend this result to combinations of first and second price auctions. 
Further, we expose a revenue equivalence property and show that the seller's revenue-maximizing reserve price is zero.

In practice, CPA-constrained buyers may  target an  empirical  CPA on a given time horizon, as the auction is repeated many times. Thus his bidding behavior depends on past realization. 
We show that the resulting buyer dynamic optimization problem can be formalized with  stochastic control tools and solved numerically with available solvers.

\keywords{Auction theory \and Ad auctions \and CPA  \and Bidding strategy  \and Equilibrium \and Control \and HJB \and Repeated Auctions.}
\end{abstract}
\settopmatter{printfolios=true}
\maketitle


\section{Introduction} 

What should you bid in a second price ad auction for a display with a known click-through rate (CTR), for a given cost-per-click (CPC)?  
The most probable (and possibly incorrect) answer is "CPC times CTR". However, the right answer is "What do you mean by cost-per-click?". 
Indeed, if the CPC is the maximal amount the buyer is ready to pay for each click, the first answer is correct. But the story is different when  CPC means the maximal average  cost per click.

In this article, we challenge the current modeling approach for ad auctions. We argue that when some advertisers' business constraints apply, the expected outcome of the auction may depart from the traditional literature.  

Advertising is a major source of revenue for Internet publishers, and as such, is financing a large part of the Internet.  About  200 billion USD spent in 2017 on digital advertising \cite{Kafka2017}.
Banners for display advertising are usually bought through a high frequency one unit auction mechanism called RTB (Real-Time Bidding) by or on behalf of advertisers.

When a user reaches a publisher page, it triggers a call to an RTB platform. The RTB platform then calls potential buyers, who have a few milliseconds to answer with a bid request.
This results in an allocation of the display to a bidder, in exchange for a payment. The allocation and the payment are defined by the rules of the auction, which depends on the RTB platforms. 
 The  winner of the auction can show a banner to the internet user. 
Yet, for many end buyers, what really matters is not the display itself, but what will result from it: clicks, conversions, etc...
This is why empirical CPA measures such as average cost per click (CPC) or average cost per order (CPO) are of paramount importance.  

The literature on display advertising auctions has been growing over the last decade, pushed by the emergence of  Internet giants whose business models rely on digital advertising. 
One track of research takes the seller's point of view  and focuses on how to build "good" auctions, and relies on mechanism design theory
\cite{balseiro2015repeated,golrezaei2017boosted}, while the dual  track brings the buyer perspective  under scrutiny, and focuses on the design of bidding strategies
\cite{zhang2014multi,agarwal2014budget,diemert2017attribution,fernandez2016optimal,cai2017real}.
The reader can refer to \cite{roughgarden2016twenty} and \cite{krishna2009auction} for an introduction to auction theory.

Usually, in performance markets, the advertiser has a target Cost Per Action (CPA) and/or a budget in mind. 
 We will focus the analysis on the CPA constraint. 
We will assume that the budget constraint is absent or not binding. This topic of budget constraints has already been addressed in the literature \cite{balseiro2015repeated,agarwal2014budget,fernandez2016optimal}. The CPA is measured in term of the average quantity of money spent per action, the action is a click, a visit, a conversion... 
For instance, if 
(a) we are facing a competition uniformly distributed between $0$ and $1$,
(b) the CTR of every display is $0.5$,
(c) the auction is second price and
(d) we are ready to pay 1 USD on average for a click,
then if we bid $0.5$, we will win half of the time. For every display won, we get on average half a click, and we pay $2\int_0^{0.5} t dt = 1/4$, thus our expected CPC is $1/(4*0.5)=0.5<1$. We should increase the bid to raise the empirical CPC.

One reason for using CPA constraints instead of budgets constraints in performance markets is that if the campaign performs well, there is no reason to shut it down at the middle of the month because of a strict budget limit.  Those constraints, implemented by algorithms, can be modified by comparing the performances of different channels. 

Auctions with ROI constraints are a hot topic \cite{Auerbach2008,Wilkens2016,Golrezaei2018}.
In particular \cite{Golrezaei2018}  provide some empirical evidence that some buyers are ROI constrained and propose an optimal auction design. 
Our work departs from them because we use a different notion of ROI:  in the definition of CPA, we do not subtract the cost to the revenue in the numerator. 
The reason to do so is that end buyers reason in « cost per click » or « cost per order ». The numerator is not expressed in a unit of money and does not include the buyer payment to the seller. 
Also, our focus is more buyer than seller oriented.

In this work, we focus on the buyers perspective, but we also show on simple market instances that the CPA constraints may impact the seller design, or give rise to undesirable competitive behaviors.

This work brings several contributions to the table.
First, we introduce the buyer's CPA constrained optimization problem and exhibit a solution, 
then we find the symmetric equilibrium in this setting and compare its properties with standard results from the literature. In particular, no reserve price should be used.
Last, since  in practice, the CPA is computed over a time window of repeated auctions, we move the discussion to the dynamic settings for which we explain why one can expect adaptive bid multipliers to provide solutions close to the optimal for the buyer.

\section{Modeling Assumptions and Notations}

We start by introducing the  CPA-constrained bidding problem.

A seller is selling one item (a display opportunity in our context)  through an auction.
The item brings a value $v_i$ to  bidder $i\in{1..n}$. 
These values are independently distributed among bidders.
For example, in the context of the  ad auction, $v$ can be thought of as the expected number of clicks (or sales) the bidder would get after winning the auction. Thus, this value is not expressed in money but in unit of action.
We will see later how these values, expressed in actions relates to bids, expressed in dollars.
Let $F_i$ be the  cumulative  distribution and $f_i$ the  density distribution of $v_i$. For simplicity, we assume the support of these density distributions to be compact intervals.

When we take the viewpoint of  bidder $i$, 
we denote by $b^{-i}$ the  greatest   bid of the other bidders (the price to beat). We denote by $g^{i}$ and $G^{i}$  its density and  cumulative distributions.
Unless otherwise stated, we also assume the auction to follow a second price rule, with $n>1$ bidders competing for the item.

Bidders with CPA constraints compare:
\begin{itemize}
\item the expected value they get in the auction
\item the expected payment they incur. 
\end{itemize}

We will refer to the CPA with the letter $\T$ because it is our "targeted cost per something". 
We will denote by $b^i$ the bid of a given bidder $i$ of interest.

The bidder wants to maximize his expected value, subject to an \emph{ex ante} constraint in expectancy representing the targeted CPA.  The constraint is \emph{ex ante} because, in practice, the same bidding strategy is going to be applied repeatedly (or even simultaneously) on several similar auctions. 
If for a random variable $X$, we denote by $[X]\in\{0,1\}$ the binary random variable that takes the value $1$ when $X\geq 0$, and $0$ otherwise, then the expected value earned by the buyer bidding $\b$ is $E_{v,b^-}[b^-<\b(v)] v$.
Here $E_{v,b^-}$ is there expectation operator on the distribution of $v$ and $b^-$.
Similarly the bidder  expected cost is $\E_{v,b^-} [b^-<\b(v)]b^- $.

We can now express the bidder optimization problem.  The bidder is looking for a bid function $\b:[0,1]\rightarrow\R^+$ solution of
  \begin{eqnarray}
    \label{eq:bidderProblem}
   \max_{\b}\quad \E_{v,b^{-i}} [b^{-i}<\b(v)]v, 
  \end{eqnarray}
 subject to  $ \E_{v,b^{-i} }[b^{-i}<\b(v)]b^{-i} \leq \T^i \E_{v,b^{-i}}[b^{-i}<\b(v)] v$ (CPA).
Observe that if we remove the constraint, the buyer would buy all the opportunities no matter the cost. 
We pinpoint that problem \eqref{eq:bidderProblem} is not equivalent to a budget constraint problem: we would have had $ \E_{v,b^{-i}} [b^{-i}<\b(v)]b^{-i} \leq budget$
instead of the (CPA) constraint. 
Nor is it equivalent to the maximization of the yield $\E_{v,b^-}(T^iv - b^{-i})[b^{-i}<b^i(v)]$, as illustrated in the introduction.

We model the  interactions among buyers with  a game. Observe that this game has  constraints on the strategies profile. 
For a given set of competing bidding strategies, a best reply of the bidder is a solution of (\ref{eq:bidderProblem}).
A (constrained) Nash equilibrium  is a strategies profile with components that are best replies against the others . 

\textit{In the following, the superscript $i$ is  often omitted to help readability.}

\section{Bidding Behavior and Symmetric Equilibrium} 
The main result of this section is the derivation of a symmetric Nash equilibrium for an CPA constrained second price auction. We then generalize this result to linear combinations of first price and second price auctions. 
\subsection{Second Price}

Going back to the motivational example of the introduction, if a buyer bids $Tv$, then he would pay \emph{at most} $T$ per unit of action. 
Since he can increase his bid until  he pays $T$ per unit of action \emph{on average}, it is intuitive that he should bid higher than $Tv$ to maximize  the criterion. The next theorem formalizes this idea. 
\begin{theorem}[Best Reply]
\label{theorem:secondprice-best-reply}
For any bounded distribution of the price to beat $b^-$, there exists $\lambda^*\in \bar{\R}^+$ such that a solution of \eqref{eq:bidderProblem}  writes: 
\begin{equation}
\label{eq:best-reply}
  \b(v,\lambda^*) = (\nicefrac {1} {\lambda^*} +\T^i )v.
\end{equation}
\end{theorem}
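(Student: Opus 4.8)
The plan is to treat \eqref{eq:bidderProblem} as a constrained optimization problem and attack it by Lagrangian relaxation, exploiting the fact that both the objective and the (CPA) constraint are \emph{linear} in the winning indicator $[b^-<b(v)]$ and can therefore be optimized pointwise in $v$. Write the objective as $f(b)=\E_{v,b^-}[b^-<b(v)]\,v$ and the constraint, rearranged as in the definition of the yield, as $g(b)=\E_{v,b^-}[b^-<b(v)](\T v-b^-)\geq 0$. Introducing a multiplier $\lambda\geq 0$ for this inequality gives the Lagrangian
\[
L(b,\lambda)=f(b)+\lambda g(b)=\E_{v,b^-}[b^-<b(v)]\big((1+\lambda\T)v-\lambda b^-\big).
\]

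First I would maximize $L(\cdot,\lambda)$ over $b$ for fixed $\lambda$. Since the integrand factorizes and the indicator may be chosen independently for each $v$, the maximization decomposes pointwise: for each $v$ the bidder should win exactly on the event where the bracket is positive, i.e. where $(1+\lambda\T)v-\lambda b^->0$, equivalently $b^-<(\nicefrac{1}{\lambda}+\T)v$. This is realized precisely by the threshold bid $b(v)=(\nicefrac{1}{\lambda}+\T)v$, which is the announced form \eqref{eq:best-reply}. This step is just the usual "bid your value" logic of second price auctions, now applied to the Lagrangian-adjusted value $(1+\lambda\T)v$; the only technical care needed is the measurable-selection/interchange-of-sup-and-expectation justification, which is standard, and the observation that ties have probability zero because $b^-$ has a density.

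Next I would pin down $\lambda^*$ by complementary slackness. Let $\varphi(\lambda)=g(b_\lambda)$ denote the constraint functional evaluated at the pointwise maximizer $b_\lambda(v)=(\nicefrac{1}{\lambda}+\T)v$. As $\lambda$ increases from $0$ to $+\infty$ the bid decreases from $+\infty$ to $\T v$, so $\varphi$ is continuous (by boundedness of $b^-$ and dominated convergence) and monotone increasing, with $\varphi(+\infty)=\E_{v,b^-}[b^-<\T v](\T v-b^-)\geq 0$ (the bid $\T v$ maximizes $g$) and $\varphi(0)=\T\,\E[v]-\E[b^-]$, which may be negative. Hence either $\varphi(0)\geq 0$, and one takes $\lambda^*=0$ (bid $+\infty$, always win); or $\varphi(0)<0$, and the intermediate value theorem yields a finite $\lambda^*>0$ with $\varphi(\lambda^*)=0$; the degenerate case where even $\T v$ cannot win profitably corresponds to $\varphi(+\infty)=0$ and is captured by $\lambda^*=+\infty$. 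This is exactly why the multiplier must be allowed to live in $\bar{\R}^+$.

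Finally I would close the loop with weak duality to upgrade this stationary point to a genuine maximizer. For any feasible $b$ and any $\lambda\geq 0$, $f(b)\leq f(b)+\lambda g(b)=L(b,\lambda)\leq L(b_\lambda,\lambda)$; taking $\lambda=\lambda^*$, complementary slackness ($\lambda^* g(b_{\lambda^*})=0$) gives $L(b_{\lambda^*},\lambda^*)=f(b_{\lambda^*})$, so $b_{\lambda^*}$ dominates every feasible bid and is optimal. I expect the main obstacle to be the existence argument of the third step: making the monotonicity and continuity of $\varphi$ rigorous and handling the two boundary regimes $\lambda^*\in\{0,+\infty\}$ cleanly, since these sit at the edge of $\bar{\R}^+$ and genuinely rely on the boundedness of the price-to-beat distribution to exclude pathologies.
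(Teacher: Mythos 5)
Your proposal is correct and follows essentially the same route as the paper's proof: Lagrangian relaxation of the (CPA) constraint, pointwise maximization of the relaxed objective via the incentive-compatibility of second price auctions applied to the adjusted value $(1+\lambda\T)v$, an intermediate-value argument to locate the multiplier at which the constraint binds (the paper parameterizes by the bid multiplier $\alpha=\nicefrac{1}{\lambda}+\T$ rather than by $\lambda$, but this is cosmetic), and weak duality plus complementary slackness to conclude optimality. Your explicit monotonicity observation for $\varphi$ mirrors the paper's Lemma~\ref{lemma:roi-monotony-second-price}, and your handling of the boundary cases $\lambda^*\in\{0,+\infty\}$ matches the paper's edge-case discussion.
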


\begin{lemma}[CPA monotony]
\label{lemma:roi-monotony-second-price}
If the bidder bids proportionally to the value i.e. if  there exists $\alpha$ such that $b(v) = \alpha v$, then the CPA is non-decreasing in $\alpha$.
\end{lemma}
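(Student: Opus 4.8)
The plan is to view the CPA induced by the proportional strategy $b(v)=\alpha v$ as a ratio of two smooth functions of $\alpha$ and to sign the derivative of that ratio. First I would write, by conditioning on $v$ and integrating over the price to beat $b^-$ (whose law is $g,G$), the expected cost and the expected value as
\[
  C(\alpha)=\E_v\Big[\int_0^{\alpha v} t\, g(t)\,\dd t\Big],
  \qquad
  V(\alpha)=\E_v\big[v\,G(\alpha v)\big],
\]
so that the CPA is exactly $C(\alpha)/V(\alpha)$, well defined as soon as $V(\alpha)>0$. Here I use that on a win the second-price payment equals $b^-$ and that $[b^-<\alpha v]$ has conditional expectation $G(\alpha v)$ given $v$.

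Next I would differentiate under the expectation, which is legitimate under the paper's compact-support and density assumptions. A Leibniz-rule computation gives $C'(\alpha)=\alpha\,\E_v[v^2 g(\alpha v)]$ and $V'(\alpha)=\E_v[v^2 g(\alpha v)]$, so the two derivatives obey the clean identity $C'(\alpha)=\alpha\,V'(\alpha)$. The sign of $\frac{\dd}{\dd\alpha}(C/V)$ is the sign of $C'V-CV'$, and substituting $C'=\alpha V'$ turns this quantity into $V'(\alpha)\big(\alpha V(\alpha)-C(\alpha)\big)$. Since $V'(\alpha)=\E_v[v^2 g(\alpha v)]\ge 0$, monotonicity of the CPA reduces to the single inequality $\alpha V(\alpha)-C(\alpha)\ge 0$.

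The last step is to establish that inequality, which I expect to be the conceptual crux rather than a genuine obstacle. Writing it back as an expectation,
\[
  \alpha V(\alpha)-C(\alpha)=\E_{v,b^-}\big[(\alpha v-b^-)\,[b^-<\alpha v]\big]\ge 0,
\]
because in a second-price auction the payment $b^-$ never exceeds the bid $\alpha v$ on the event that the bidder wins; equivalently, integrating $\int_0^{\alpha v} t\,g(t)\,\dd t$ by parts rewrites the left-hand side as $\E_v\big[\int_0^{\alpha v} G(t)\,\dd t\big]\ge 0$. Combining the three steps yields $\frac{\dd}{\dd\alpha}\big(C/V\big)\ge 0$, i.e. the CPA is non-decreasing in $\alpha$.

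Finally I would flag the only points needing care. The ratio must be well defined, so I restrict attention to the range of $\alpha$ with $V(\alpha)>0$; for very small $\alpha$ the bidder may never win, and the CPA at $\alpha=0$ should be handled as a limit. Differentiation under the expectation is justified precisely because the support of $v$ is a compact interval and $g$ is a density, so no further regularity is required.
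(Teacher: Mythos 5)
Your proof is correct, but it takes a genuinely different route from the paper's. The paper conditions on the value $v$: given $v$ and a win, the expected cost $\int_0^{\alpha v} t\,g(t)\,\dd t$ increases with $\alpha$ while the value stays at $v$, and the lemma is then asserted ``by integration on $v$''. You instead work directly with the ratio $C(\alpha)/V(\alpha)$ of the two unconditional expectations, exploit the identity $C'(\alpha)=\alpha V'(\alpha)$, and reduce monotonicity to the single inequality $\alpha V(\alpha)\geq C(\alpha)$, which holds because the second-price payment never exceeds the bid. Your version is arguably the more complete one: the paper's final integration step silently passes from monotonicity of per-$v$ ratios to monotonicity of a ratio of integrals, and since the weights $vG(\alpha v)$ in the denominator also move with $\alpha$, pointwise monotonicity in $v$ does not by itself deliver the aggregate statement. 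Your derivative computation closes that gap (an equivalent elementary fix is to view the CPA as the $v$-weighted average of $\rho=b^-/v$ over the event $\{\rho<\alpha\}$, which can only grow as $\alpha$ admits larger values of $\rho$). The only caveats are the ones you already flag: restrict to $\alpha$ with $V(\alpha)>0$ and justify differentiation under the expectation, both harmless under the paper's compact-support and density assumptions.
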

\begin{proof}
Fix $u\in[0,1]$, $\alpha>0$, then the expected cost knowing that the value is $v$ and the auction is won is $\int_0^{\alpha v} t g(t) \dd t$, where $g$ is the probability density of the price to beat. This quantity is increasing in $\alpha$. Meanwhile the value is $v$.
We get the result by integration on $v$.
\end{proof}

If $v_1, v_2, \ldots v_n$ are $n$ independent draws from $F$, we denote by $v_i^{(n)}$ the average of the $i^{th}$ greater draw (the ith order statistic).  
\begin{theorem}[Equilibrium Bid]
\label{theorem:equilibrium-bid-second-price}
  The unique  symmetric constrained Nash equilibrium strategy is
\begin{equation}
\b^*(v) = \frac{ \T}{\gamma}   v,
\end{equation}
where 
\begin{equation}
\label{eq:gamma-def}
\gamma(F,n) =  n \left(\frac{v_1^{(n-1)}}{v_1^{(n)}} - \frac{n-1}{n}\right).
\end{equation}
\end{theorem}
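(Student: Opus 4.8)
The plan is to exploit the linear form of best replies from Theorem~\ref{theorem:secondprice-best-reply} to reduce the search for a symmetric equilibrium to a single scalar, pin that scalar down by forcing the CPA constraint to bind, and evaluate the binding constraint by an order-statistics computation. First I would note that in any symmetric equilibrium every bidder plays a common strategy $\b^*$, and since $\b^*$ must be a best reply against the competition it induces, Theorem~\ref{theorem:secondprice-best-reply} forces $\b^*(v)=\alpha v$ for some $\alpha\geq\T$. The whole problem thus collapses to determining the scalar $\alpha$.

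Next I would describe the interaction at the symmetric profile $\alpha v$. A tagged bidder of value $v$ faces a price to beat $\alpha M$, where $M$ is the maximum of the $n-1$ competing values; as $\alpha>0$, he wins exactly on the event $[M<v]$, i.e. when his draw is the largest of the $n$. I would then show the constraint binds: holding the competition at $\alpha v$ and letting the tagged bidder use coefficient $\beta$, his won value $\E_{v,M}[\alpha M<\beta v]v$ is strictly increasing in $\beta$, while by Lemma~\ref{lemma:roi-monotony-second-price} his realized CPA is non-decreasing in $\beta$; hence he raises $\beta$ until the CPA equals $\T$. In particular the candidate $\alpha=\T$ is not an equilibrium, since there the cost $\T\,\E_{v,M}[M<v]M$ is strictly below $\T\,\E_{v,M}[M<v]v$, leaving the bidder slack room to bid up.

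The core step evaluates the two sides of the binding equation $\alpha\,\E_{v,M}[M<v]M=\T\,\E_{v,M}[M<v]v$ by symmetry. Writing $M_i=\max_{j\neq i}v_j$ and summing over the $n$ exchangeable bidders, exactly one draw is the global maximum, so $\sum_i v_i[M_i<v_i]$ equals the largest draw and $\sum_i M_i[M_i<v_i]$ equals the second largest draw. Taking expectations and dividing by $n$ yields $\E_{v,M}[M<v]v=\nicefrac{v_1^{(n)}}{n}$ and $\E_{v,M}[M<v]M=\nicefrac{v_2^{(n)}}{n}$, so the binding equation reduces to $\alpha=\T\,v_1^{(n)}/v_2^{(n)}$. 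This single value already gives existence and uniqueness of the symmetric equilibrium coefficient.

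The last step, and the one I expect to be the main obstacle, is matching $v_1^{(n)}/v_2^{(n)}$ with $1/\gamma$, i.e. proving $v_2^{(n)}=n\,v_1^{(n-1)}-(n-1)\,v_1^{(n)}$. I would establish this with a delete-one coupling: removing a uniformly chosen sample from $n$ i.i.d. draws leaves $n-1$ i.i.d. draws, whose maximum equals the second largest of the original $n$ with probability $\nicefrac1n$ (when the global maximum is deleted) and the largest otherwise. Taking expectations gives $v_1^{(n-1)}=\nicefrac1n\,v_2^{(n)}+\nicefrac{n-1}{n}\,v_1^{(n)}$, and solving for the ratio produces exactly $v_2^{(n)}/v_1^{(n)}=\gamma$ as in \eqref{eq:gamma-def}. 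Substituting back gives $\b^*(v)=(\T/\gamma)v$, completing the proof.
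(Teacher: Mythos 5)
Your argument reaches the paper's formula, but by a genuinely more direct route, and it is correct as far as existence of the symmetric equilibrium is concerned. The paper obtains the coefficient as a first-order condition on the Lagrange multiplier $\lambda$, reusing the machinery of the proof of Theorem~\ref{theorem:equilibirum_strategy-general} (which covers arbitrary convex combinations of first and second price), arriving at $\gamma = \E v^-[v>v^-]/\E v[v>v^-]$ and then citing the order-statistics identity $\E Y_2^{(n)} = n\E Y_1^{(n-1)} - (n-1)\E Y_1^{(n)}$ from Krishna. You instead impose the binding CPA constraint directly, evaluate both sides by exchangeability (summing $v_i[M_i<v_i]$ and $M_i[M_i<v_i]$ over the $n$ bidders to identify the first and second order statistics), and prove the order-statistics identity yourself via a delete-one coupling. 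Your route is more elementary and self-contained for the pure second-price case; the paper's route is the one that extends to the mixed payment rules of Section~\ref{subseq:generalization}. Your argument that the constraint must bind (objective increasing in the coefficient, CPA non-decreasing by Lemma~\ref{lemma:roi-monotony-second-price}, edge case excluded) matches the paper's in spirit.

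The one genuine gap is in the uniqueness claim. You assert that Theorem~\ref{theorem:secondprice-best-reply} \emph{forces} any symmetric equilibrium strategy to be linear, but that theorem only states that \emph{some} best reply has the form $(\nicefrac{1}{\lambda^*}+\T)v$; it does not exclude other, non-linear best replies, so a symmetric equilibrium could a priori be a non-linear selection from the best-reply correspondence. The paper closes this with a separate argument: any best reply must maximize the Lagrangian integrand $\E_{b^-,v}[b^-<b](v-\lambda^*(b^- - \T v))$ pointwise, and wherever the induced price-to-beat density $g(b(v))$ is strictly positive the pointwise maximizer is the linear bid, so an equilibrium strategy can differ from $\alpha^* v$ only on a null set (whence ``essentially unique''). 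You need this, or an equivalent step, to upgrade your conclusion from ``unique among linear symmetric strategies'' to the uniqueness asserted in the theorem.
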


\paragraph{Discussion}

Theorem \ref{theorem:secondprice-best-reply} is the answer to the introductory question. Despite its simplicity, it shows something that is probably overlooked: it may happen in a second price auction that the bidder's optimal bid depends on the competition. Another useful insight is that the bid is linear w.r.t. the value, which implies that simple bid multipliers could be used optimally. Observe that the result holds also for non-symmetric settings.

The proof relies on  the strong incentive compatibility of the second price auction in standard setting reinterpreted on  the Lagrangian of the optimization problem. 
The parameter $\lambda^*$ should be interpreted as the Lagrangian multiplier associated with the CPA constraint. 

Informally speaking, the harder the constraint, the bigger $\lambda$, the smaller the bid.
When $\lambda$ is set to $+\infty$, one  bids only for displays  that do not consume the constraint, while when $\lambda$ goes to zero, the bid diverges.

Lemma \ref{lemma:roi-monotony-second-price} is  useful  understand what is happening. Observe  that since the bid is linear w.r.t.  the value and the  objective increasing in the bid multiplier, the optimal bid  multiplier is the maximal admissible  one.

Theorem \ref{theorem:equilibrium-bid-second-price} is one of our main result. Observe that
 the \textbf{competition factor} $\gamma$ is only a function of the number of opponents and $F$. Since it is  smaller than one (see Appendix) the bid is  proportional to  $\T v$ with a factor greater than one.
 Compared with the standard setting, the seller's expected  revenue   is multiplied by   $\nicefrac{ \T}{\gamma} $.
The asymptotic value  of $\gamma$, as well as other natural questions concerning the equilibrium will be studied   in Section \ref{sec:properties-of-the-equilibrium}.
 Note that Theorem \ref{theorem:equilibrium-bid-second-price}   is a necessary and sufficient condition for a symmetric Nash equilibrium, therefore it is unique, but we cannot claim  that we have identified the only Nash Equilibrium  (see Section \ref{sec:properties-of-the-equilibrium}).

\subsection{Generalization}
\label{subseq:generalization}

We now generalize the argument to auctions which are convex combinations of first price and second price.
We motivate this extension by the existence  in the industry of auctions which mix together first and second pricing rules, such as soft-floor.

If we denote by $S(b,b^-)$ the payment rule of the auction when the buyer bids $b$ and the best competing bid is $b^-$, then we can make the following observations: 
(1) for any $s\geq 0$, $S(s b,sb^-) = sS(b,b^-)$,
(2)  $S(b,b) = b$, 
(3) $S(b,b^-)$ is the sum of a linear function of $b$ and a linear function of $b^-$.

Consider a standard auction with payment rule $S$ and symmetric buyers with i.i.d. values distributed according to F (no CPA constraints). Let $\bb$ be a symmetric equilibrium strategy in such setting.

\begin{theorem}
  \label{theorem:equilibirum_strategy-general}
The bid $\b^*(v) = \frac{ \T}{\gamma}  \bb(v)$ is a symmetric equilibrium strategy, of the CPA constrained auction with payment rule $S$.
\end{theorem}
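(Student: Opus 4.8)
The plan is to reduce bidder $i$'s constrained best-reply problem to the \emph{unconstrained} best reply of the standard $S$-auction, for which $\bb$ is by hypothesis a symmetric equilibrium, and then to let the positive homogeneity of $S$ absorb the scaling factor $\T/\gamma$. I would proceed exactly as in the proof of Theorem~\ref{theorem:secondprice-best-reply}: fix bidder $i$, assume every competitor plays $\b^*$, and write the Lagrangian of \eqref{eq:bidderProblem} with the cost term now supplied by the payment rule $S$. For a multiplier $\lambda>0$ the bidder maximizes, pointwise in $v$,
\[
\E_{b^-}\,[\,b^-<b\,]\bigl((\tfrac1\lambda+\T)\,v-S(b,b^-)\bigr),
\]
so the Lagrangian problem is formally a standard $S$-auction in which bidder $i$ has monetary value $\theta=(\tfrac1\lambda+\T)v$.

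Next I would describe the competition and rescale. When the opponents play $\b^*=\tfrac{\T}{\gamma}\bb$, the price to beat is $b^-=\tfrac{\T}{\gamma}\,\tilde b^-$, where $\tilde b^-=\max_{j\neq i}\bb(v_j)$ is precisely the equilibrium price to beat of the standard $S$-auction. Writing $c=\T/\gamma$ and substituting $b=c\,b'$, property (1) gives $S(cb',c\tilde b^-)=c\,S(b',\tilde b^-)$ and $[\,c\tilde b^-<cb'\,]=[\,\tilde b^-<b'\,]$, so the objective becomes
\[
c\,\E_{\tilde b^-}\,[\,\tilde b^-<b'\,]\bigl(\tfrac1c(\tfrac1\lambda+\T)v-S(b',\tilde b^-)\bigr).
\]
I would then choose $\lambda$ so that $\tfrac1c(\tfrac1\lambda+\T)=1$, i.e. $\tfrac1\lambda+\T=\T/\gamma$, equivalently $\lambda^*=\tfrac{\gamma}{\T(1-\gamma)}$, which is strictly positive because $\gamma<1$. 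For this $\lambda^*$ the bracketed expression is, up to the positive constant $c$, the standard best-reply objective of a bidder with value $v$ facing opponents who play $\bb$; since $\bb$ is a symmetric equilibrium of the standard $S$-auction, its maximizer is $b'=\bb(v)$, whence $b=c\,\bb(v)=\b^*(v)$.

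It remains to certify that $\lambda^*$ is admissible as a KKT multiplier, i.e. that the CPA constraint is active at $\b^*$; this is the step I expect to be the main obstacle, since it is where the specific value of $\gamma$ must be matched to the general $S$. Using homogeneity once more, the expected payment at $\b^*$ equals $c$ times the expected payment of the standard equilibrium $\bb$, while the expected value delivered is unchanged. By the revenue-equivalence property the expected payment of $\bb$ coincides with that of the truthful second-price auction, namely $v_2^{(n)}$, whereas the expected value delivered is $v_1^{(n)}$; hence the empirical CPA equals $c\,v_2^{(n)}/v_1^{(n)}$. The identity $v_2^{(n)}=n\,v_1^{(n-1)}-(n-1)v_1^{(n)}$ turns $v_2^{(n)}/v_1^{(n)}$ into \eqref{eq:gamma-def}, so $v_2^{(n)}/v_1^{(n)}=\gamma$ and the empirical CPA is exactly $c\gamma=\T$. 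The constraint therefore binds, complementary slackness holds with $\lambda^*>0$, and the saddle-point sufficiency argument of Theorem~\ref{theorem:secondprice-best-reply} upgrades ``$\b^*$ maximizes the Lagrangian'' to ``$\b^*$ solves the constrained problem''. As this holds for bidder $i$ while all others play $\b^*$, the profile $\b^*$ is a symmetric constrained Nash equilibrium. Properties (2) and (3) of $S$ would enter only indirectly, through the revenue-equivalence result and the monotonicity of $\bb$ it presupposes.
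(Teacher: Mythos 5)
Your argument is correct, and its skeleton is the same as the paper's: write the Lagrangian, recognize it as the best-reply objective of a \emph{standard} $S$-auction for a bidder with monetary value $(\nicefrac{1}{\lambda}+\T)v$, and use the positive homogeneity of $S$ to absorb the common scaling factor so that the best reply to opponents playing $(\nicefrac{1}{\lambda}+\T)\bb$ is $(\nicefrac{1}{\lambda}+\T)\bb(v)$. Where you genuinely diverge is in pinning down $\lambda^*$, i.e.\ in verifying that the CPA constraint is active at the proposed profile. The paper does this by direct computation: it decomposes $S(b,b^-)=(1-\kappa)b+\kappa b^-$ (property (3)), integrates by parts to evaluate the Lagrangian at the linear maximizer, and writes the first-order condition in $\lambda$, which is precisely the activeness condition and reduces to $\gamma=\E v^-[v>v^-]/\E v[v>v^-]$. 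You instead invoke classical revenue equivalence for the \emph{unconstrained} $S$-auction to equate the expected payment under $\bb$ with that of the truthful second-price auction, so that the empirical CPA at $\b^*$ is $\frac{\T}{\gamma}\cdot\nicefrac{v_2^{(n)}}{v_1^{(n)}}$, and the same order-statistic identity $v_2^{(n)}=n v_1^{(n-1)}-(n-1)v_1^{(n)}$ closes the loop. Your route is shorter and makes transparent why a single $\gamma$ works for the whole class of payment rules, at the price of importing revenue equivalence as an external ingredient --- legitimately, since it is the classical unconstrained version and not the paper's Theorem 4.1, so there is no circularity. The paper's computation is self-contained and yields the explicit value of the Lagrangian as a by-product, but leans harder on property (3). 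Both arguments require $0<\gamma<1$ for $\lambda^*$ to be finite and positive, which you correctly flag.
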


\paragraph{Example}: 
Consider a first price auction with 2 buyer having uniformly distributed values in $[0,1]$.
It is known  \cite{krishna2009auction} that an equilibrium bid for a standard auction is $v \rightarrow 0.5 v$.
By \eqref{eq:gamma-power-case}, $\gamma = 0.5$.
Thus an equilibrium bid for the CPA constrained auction is 
$v \rightarrow T v$.
Indeed: in first price, we ensure the saturation of the CPA constraint by bidding $Tv$.

\section{Properties of the Equilibrium} 
\label{sec:properties-of-the-equilibrium}

Observe that the expected payment of a buyer is equal to his expected value times $\T$. Since the total welfare is fixed, the expected payment of every buyer is the same no matter the auction. 
We thus recover a standard result \cite{vickrey1962auctions} from auction theory, adapted to CPA constrained auctions. 
\begin{theorem}[Revenue equivalence]
All the auctions described in Section \ref{subseq:generalization} bring the same expected revenue to the seller.
\end{theorem}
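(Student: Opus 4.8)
The plan is to show that, at the symmetric equilibrium of any auction from Section \ref{subseq:generalization}, the seller's expected revenue equals $\T\, v_1^{(n)}$, a quantity that depends only on the value distribution $F$, the number of bidders $n$, and the target $\T$, but not on the payment rule $S$. Since this expression is format-independent, revenue equivalence follows at once.

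First I would argue that the CPA constraint is \emph{binding} at equilibrium. By Lemma \ref{lemma:roi-monotony-second-price} the empirical CPA increases with the bid multiplier, and the objective in \eqref{eq:bidderProblem} is itself increasing in that multiplier; hence a best reply pushes the multiplier up to the largest value compatible with the constraint, i.e.\ the constraint is saturated. Consequently, for each bidder the expected payment equals exactly $\T$ times the expected value earned:
\[
\E_{v,b^{-i}}\big[[b^{-i}<\b^*(v)]\,S(\b^*(v),b^{-i})\big] \;=\; \T\,\E_{v,b^{-i}}\big[[b^{-i}<\b^*(v)]\,v\big].
\]
Summing this identity over the $n$ bidders, the seller's revenue $\rev$ equals $\T$ times the total expected value allocated across all bidders.

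Next I would observe that the equilibrium bid $\b^*(v)=\frac{\T}{\gamma}\,\bb(v)$ is strictly increasing in $v$, since $\bb$ is a standard (monotone) symmetric equilibrium strategy and $\T/\gamma>0$ is a constant multiplier. Therefore the item is always awarded to the bidder with the highest value, so the allocation is \emph{efficient} and identical across all the auctions considered. The total expected value allocated is then simply the expectation of the highest of the $n$ values, namely $v_1^{(n)}$, a function of $F$ and $n$ alone. Combining with the previous step gives $\rev=\T\, v_1^{(n)}$, independent of $S$, which is the claim.

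The hard part will be the first step: rigorously establishing that the CPA constraint binds at the equilibrium for \emph{every} payment rule $S$, not just for the second price case covered directly by Lemma \ref{lemma:roi-monotony-second-price}. For the general rule this should follow from properties (1)--(3) of $S$ together with Theorem \ref{theorem:equilibirum_strategy-general}, since scaling the entire bid profile by $\T/\gamma$ rescales payments homogeneously (property (1)) while leaving the efficient allocation unchanged; one still needs to verify that the induced multiplier is precisely the saturating one. Once saturation holds, the revenue-equivalence conclusion is immediate.
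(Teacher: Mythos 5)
Your proposal is correct and follows essentially the same route as the paper: establish that the CPA constraint is saturated at the symmetric equilibrium, deduce that each bidder's expected payment is $\T$ times his expected value won, and use the fixed (efficient) allocation of total welfare $v_1^{(n)}$ to conclude that the seller's revenue is $\T\, v_1^{(n)}$ regardless of the payment rule $S$. Your version is somewhat more explicit than the paper's (which simply cites the proof of Theorem 3.3 for the binding of the constraint), and you rightly flag that saturation for a general $S$ is the point needing the most care, but the underlying argument is the same.
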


\begin{proof}
Using the proof of Theorem 3.3, we see that 
constraint in \eqref{eq:bidderProblem} is binding at the  symmetric equilibrium. Therefore the payment of a buyer to the seller is equal to  $T\E v[v>v^{-}] =T\E \max_{i\in\{1..n\}}(v_{i})/n$. 
\end{proof}
On the other hand, the following observation is quite unusual. 

\begin{theorem}[Optimal Reserve Price]
\label{th:reserve-price}
With CPA constraints, the optimal reserve price in a symmetric setting is zero.
\end{theorem}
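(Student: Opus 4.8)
The plan is to leverage the identity, already established in the proof of the revenue equivalence theorem, that at any symmetric CPA-constrained equilibrium the (CPA) constraint is \emph{binding}, so that each buyer's expected payment equals $\T$ times his expected delivered value. Summing over buyers, the seller's revenue is exactly $\T$ times the expected realized welfare. The theorem then reduces to a pure welfare comparison: a reserve price can only withhold the item and never improves the allocation among participating bidders, so it can only decrease welfare, and hence revenue.

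First I would fix a reserve price $r\ge 0$ and verify that the equilibrium remains of the binding-constraint type, so the identity survives. A bidder using a linear strategy $b(v)=\alpha v$ wins at value $v$ only when $\alpha v$ exceeds both $r$ and the competing bid $b^{-}$, and then pays $\max(r,b^{-})$. By the same argument as in Lemma \ref{lemma:roi-monotony-second-price} --- the conditional expected cost $r\,G(r)+\int_r^{\alpha v} t\,g(t)\,\dd t$ is non-decreasing in $\alpha$ while the conditional delivered value stays $v$ --- the CPA is monotone in the multiplier $\alpha$. Since the objective is strictly increasing in $\alpha$, a best-replying bidder raises $\alpha$ until the constraint saturates. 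Hence at the symmetric equilibrium with reserve $r$ each buyer pays $\T$ times his expected value, and aggregating gives
\begin{equation}
\rev(r) = \T \,\E\!\big[\textstyle\max_{i} v_i \cdot \mathbf{1}\{\text{item sold}\}\big],
\end{equation}
with $\rev(0)=\T\, v_1^{(n)}$ matching the revenue equivalence theorem.

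Next I would compare welfare across reserve prices. In the symmetric increasing equilibrium of Theorem \ref{theorem:equilibrium-bid-second-price} the item is awarded to the highest-value bidder whenever the top bid clears $r$, and withheld otherwise. Allocating to the highest value maximizes the value delivered, so the only effect of $r>0$ is to create the event where all bids fall below $r$ and the item goes unsold, delivering zero value. For $r$ in the interior of the support this event has strictly positive probability and entails strictly positive foregone value, so the expected welfare is strictly decreasing there (and flat below or above the support). Since $\rev(r)=\T\cdot(\text{expected welfare})$ with $\T>0$, the revenue is maximized precisely where the welfare is, namely at $r=0$.

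The main obstacle is the first step: confirming that introducing a reserve does not destroy the binding-constraint structure, since only then does the clean identity $\rev=\T\times\text{welfare}$ hold. One must check the CPA monotonicity in $\alpha$ in the presence of the floor $\max(r,b^{-})$ and dispose of degenerate cases, for instance a reserve so high that no admissible multiplier saturates the constraint --- a regime in which the seller trivially earns no more anyway. Once the binding property is secured for every $r$, the welfare monotonicity and the conclusion $r^{*}=0$ follow immediately.
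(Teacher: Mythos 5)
Your proposal reaches the correct conclusion and rests on the same underlying mechanism as the paper's proof --- the CPA constraint ties revenue to delivered welfare, and a reserve price can only destroy welfare --- but you take a strictly harder route than necessary. The paper uses the constraint only as a one-sided bound for $r>0$: feasibility alone gives payment $\leq \T\,\E v[v>v^{-i}][v>r] \leq \T\,\E v[v>v^{-i}]$, whatever the equilibrium with reserve $r$ looks like, and bindingness is invoked only at $r=0$ to show that the last quantity is actually attained there. You instead set out to prove that the constraint binds at equilibrium for \emph{every} $r$, so as to obtain the exact identity $\rev(r)=\T\times\mbox{welfare}(r)$ before comparing welfares. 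That identity is more information than the theorem needs, and it is precisely where your argument is weakest: with a reserve, a bidder who starts winning pays at least $r$, so for low values the per-value CPA starts at $r/v$, which may already exceed $\T$; the claim that a best-replying bidder can always raise $\alpha$ until exact saturation therefore requires redoing the continuity and initial-feasibility parts of the Theorem~\ref{theorem:secondprice-best-reply} argument with the floor $\max(r,b^{-})$, and it genuinely fails for large $r$ (your ``degenerate case''), which you dispose of only informally. If you replace your first step by the observation that feasibility alone bounds each buyer's payment by $\T$ times his delivered value, your welfare comparison in the second step immediately yields the theorem and the hard lemma disappears; as written, your version buys a sharper statement (the exact revenue curve $r\mapsto\rev(r)$) at the cost of an equilibrium characterization with reserves that is sketched rather than proved.
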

Explanation: a reserve price would decrease the social welfare, and thus decrease the expected payment because of the CPA constraint.
\begin{proof}
Same idea as before: 
In the presence of a reserve price,  the constraint in \eqref{eq:bidderProblem} tells us that 
the buyer  expected payment is smaller than 
$T\E v[v>v^{-i}] [v>r]$, which is  smaller than $T\E v[v>v^{-i}]$  for r>0. 
Using the proof of Theorem 3.3 we see that the constraint should be binding at the equilibrium, and thus
this last quantity corresponds to the expected payment when $r=0$.
Thus the reserve price should be set to zero.
\end{proof}

\begin{figure}%
    \centering
   \includegraphics[width=7cm]{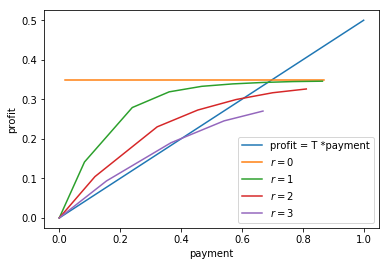} %
    \caption{Potential outcomes (buyers' profits and payments) for different values of the bid multiplier in a symmetric setting. Each curve corresponds to a different value of the reserve price. The equilibrium points are the intersections of those curves with the line profit = $T$*payment. \emph{We took $T = 0.4$, uniform distribution of values, and two bidders for the numerical simulation.}} %
    \label{fig:optinalReservePrice}%
\end{figure}

We display in Figure \ref{fig:optinalReservePrice} the results of a simulation that illustrates Theorem \ref{th:reserve-price}.
Observe that the proofs of the two last results rely on the symmetry of the setting (same CPA target $T$, same value distribution  $F$).
We argue that this should not be seen as a limitation of our results, because (a) Revenue Equivalence 
in the classical setting also requires symmetry, (b) 
Theorem \ref{th:reserve-price}  provides a striking  counterexample to the common belief that the reserve price increases the revenue of the seller.
We believe Theorem \ref{th:reserve-price} to be extendable to nonsymmetric setting, as long as one can guarantee that the CPA constraint is binding.

This completely departs from the idea that reserve prices should be used to increase the seller's revenue. Observe that in practice in the case of display advertising, the buyer may take some time to react to a change of empirical CPA. 
Consequently, measuring the seller's long term expected revenue uplift is a technical and business challenge.
Moreover, the reserve price, by decreasing the welfare,  may on the long run trigger an increase of the target CPA, since buyers have to do an arbitrage  between volume and CPA. 

Last but not least,  we  do not claim that reserve prices should be set to zero for CPA constrained bidders in all situations.
A trivial counterexample to such claim  is  embodied by a setting with only one buyer and no competition.

\begin{theorem}[Convergence]
\label{convergence}

 For any  number of bidders $n$,  $\gamma(F,n)\leq1$. Moreover
if the value $v$ is bounded, then
\begin{equation}
\lim_{n\rightarrow \infty} \gamma(F,n) = 1
\end{equation}
\end{theorem}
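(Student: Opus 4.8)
The plan is to recast the competition factor as the ratio of two order-statistic expectations, after which both assertions become transparent. Write $M_k := v_1^{(k)} = \E\bigl[\max(X_1,\dots,X_k)\bigr]$ for $X_j$ i.i.d.\ $\sim F$ (recall values are nonnegative, so $\mathrm{supp}(F)\subseteq[0,b]$ for some $b>0$). Starting from \eqref{eq:gamma-def}, a one-line rearrangement gives $\gamma(F,n) = \bigl(n M_{n-1} - (n-1)M_n\bigr)/M_n$. The crux of the proof is to recognise the numerator: I claim that $n M_{n-1} - (n-1)M_n = v_2^{(n)}$, the expected \emph{second}-largest draw among $n$ draws, so that $\gamma(F,n) = v_2^{(n)}/v_1^{(n)}$.

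I would establish this identity by a leave-one-out symmetry argument. Among $n$ draws with maximum $Y$, deleting a uniformly chosen draw and taking the maximum of the remaining $n-1$ has expectation $M_{n-1}$ by exchangeability; this restricted maximum equals $Y$ except when the deleted draw is the unique maximizer (the only relevant event, occurring for exactly one index a.s.\ since $F$ has a density), in which case it drops to the second-largest value. Summing the $n$ increments and taking expectations yields $n(M_n - M_{n-1}) = v_1^{(n)} - v_2^{(n)}$, which rearranges to the claim. As an alternative, purely analytic, check one can use the tail representation $M_k = \int_0^{b}\!\bigl(1 - F(x)^k\bigr)\,\dd x$ together with $P(\text{2nd largest} > x) = 1 - nF^{n-1}(1-F) - F^n = 1 - nF^{n-1} + (n-1)F^n$; integrating gives $v_2^{(n)} = \int_0^{b}\bigl[1 - nF^{n-1} + (n-1)F^n\bigr]\,\dd x = n M_{n-1} - (n-1)M_n$, as asserted.

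With $\gamma(F,n) = v_2^{(n)}/v_1^{(n)}$ in hand, the bound $\gamma(F,n)\le 1$ is immediate: the second-largest draw is pointwise at most the largest, so $v_2^{(n)}\le v_1^{(n)}$, and $v_1^{(n)}>0$ because the support is a nondegenerate interval in $\R^+$. For the limit, let $b$ be the right endpoint of the compact support. The largest order statistic converges to $b$ monotonically a.s., and the second-largest converges to $b$ in probability since $P(X_{(2)} \le b-\varepsilon) = (1-p)^n + np(1-p)^{n-1}\to 0$ with $p = 1 - F(b-\varepsilon) > 0$; both sequences are bounded by $b$, so dominated convergence gives $v_1^{(n)}\to b$ and $v_2^{(n)}\to b$. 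Hence $\gamma(F,n) = v_2^{(n)}/v_1^{(n)} \to b/b = 1$.

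The only genuine content is the identity $n M_{n-1} - (n-1)M_n = v_2^{(n)}$; once it is in place, the monotonicity bound and the limit are routine. I expect the leave-one-out step to be the most delicate to state rigorously, chiefly the almost-sure uniqueness of the maximizer, which is exactly what the density assumption buys. Boundedness of $v$ enters twice and is essential: it forces $v_1^{(n)},v_2^{(n)}\to b<\infty$ (so the ratio tends to $1$ rather than an indeterminate form) and it licenses the dominated-convergence passage to the limit.
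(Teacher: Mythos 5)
Your proof is correct and takes essentially the same route as the paper: the paper's one-line observation $\gamma(F,n) = 1 - \E\bigl(v - Y^{(n-1)} \mid v > Y^{(n-1)}\bigr)/v_1^{(n)}$ is, once one notes $P(v > Y^{(n-1)}) = 1/n$, exactly your identity $\gamma(F,n) = v_2^{(n)}/v_1^{(n)}$ via the order-statistic relation $v_2^{(n)} = n v_1^{(n-1)} - (n-1) v_1^{(n)}$. You merely make explicit the supporting steps (the derivation of that identity and the dominated-convergence passage to the limit) that the paper leaves implicit.
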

\begin{proof}
Let us denote by $Y^{(k)}$ the maximum of $k$ random variables drawn with the distribution $F$.  One just need to observe that
\begin{equation}
\label{eq:new-gamma-def}
\gamma(F,n) = 1 -\frac{\E(v-Y^{n-1}|v>Y^{n-1})}{v_1^{(n)}}.
\end{equation}

\end{proof}
This can be interpreted in the following way: when there is a great number of bidders, the competition is such that the payment tends to become first price.

\begin{remark}
There may be  a nonsymmetric equilibrium.
\end{remark}
The basic idea is that one bidder can bid higher than necessary to force another bidder to leave the auction because his (CPA) constraint cannot be satisfied.
\begin{proof}

We  concentrate on exhibiting a counter-example.
Take $n=2$, $T=1$, $F$ is the uniform distribution over $[2,3]$, and denote by $(\alpha_i)_{i=1,2}$ the bid multipliers of the two buyers.
Set $\alpha_2 =6$. Then if $\alpha_1\leq 4$, buyer 1 does not win any auction, but the (CPA) constraint is satisfied.

On the other hand, if $\alpha_1 > 4$, then the cost vs value ratio is bigger than 4, and the (CPA) constraint is violated. 
We can check that if $\alpha_1 =0$ the (CPA) constraint is satisfied for $2$.
Therefore $(\alpha_1,\alpha_2) =(0,6)$ is an asymmetric Nash equilibrium.
\end{proof}

In addition to this, observe that the buyers may be tempted to bid other strategies than the linear best reply.
For example consider this  example: 
Take $n=2$, an exponential distribution with parameter 1, $T=1$. 
One buyer may increase its profit by bidding with an \emph{affine function}. 
Compare $(b_1(v_1),b_2(v_2)) =(2v_1 +1,2v_2) $ with $(b_1(v_1),b_2(v_2)) =(3v_1 ,3v_2) $. 

On a simulation with $10^8$ auctions, we get in the first case an empirical CPA of 
$0.92$, and a revenue of $0.84$ for bidder 1 (resp. $1.0$ and $0.61$ for bidder 2), while in the second case, we get
an empirical CPA  of
$1$, and a revenue of $0.75$ for bidder 1 (resp. $1.0$ and $0.75$ for bidder 2).
Those simple, informal examples indicate that the bidders may be tempted to bid aggressively to weaken the other bidders CPA.

\section{Dynamic Bidder Problem}
\label{sec:dynamic}
In practice, the buyer behaviors may differ from the static case: 
(1)  the buyer can adapt his bidding strategy to  the past events, 
(2) the linear constraint (CPA) does not reflect the buyer risk aversion, 
(3) the benefit the buyer gets from a won auction is stochastic (for example, if he is only interested in clicks or conversions), 
(4) from a business perspective,  there is a trade-off between CPA and volumes in the buyer mind. This trade-off can be expressed in different ways. 

We propose in this section a continuous time optimal control framework to express and study the buyer's problem in a dynamic fashion. The approach combines the advantages of a powerful and mathematically clean expressiveness with  theoretical insights and numerical tools. 

We refer the reader to \cite{pham2009continuous} for a reference on stochastic control, and to \cite{Falcone2013} for a presentation of the numerical methods involved. The User Guide \cite{bonnans:hal-01192610} provides a hands-on presentation  of the topic. 

We only model one individual buyer. Yet note that the optimal control formulation is the  first building block for  the  study of the full market dynamics (with mean field games\cite{Carmona2013,gueant2011} or differential games\cite{Isaacs1999}).
Our main discovery in the stochastic case  is that (a) the optimal bid is not linear in the value anymore,  but (b)  we can  still reasonably propose  a linear bid (at the cost of an approximation to be discussed thereafter).

\subsection{Deterministic  Dynamic Formulation}
We use the deterministic case to introduce some notations, 
and recover with optimal control tools some of the results we already derived with the static model. Indeed, when we neglect the stochastic aspect of the reward, the problem is very similar to the static one. 

We consider a continuous time model. The buyer receives a continuous flow of requests.  
The motivation to propose a continuous time model is that it makes the theoretical analysis easier \cite{pham2009continuous,Heymann2018}.

We denote by $\rev$ the instantaneous revenue and by $\cost$ the instantaneous cost: 

\begin{equation}
\rev(b) = \E v [b^{-}<b(v)],
\quad 
\mbox{and}
\end{equation}

\begin{equation}
\cost(b)  = \E b^{-} [b^{-}<b(v)],
\end{equation}
If we denote by $\tau$ the time horizon, the bidder is now maximizing with respect to $b$
\begin{eqnarray}
\label{eq:MDP}
\int_0^\tau \rev(b_t)\dd t,
\end{eqnarray}
subject to 
$\dot{X}_t = \T \rev(b_t) - \cost(b_t)$, 
$X_\tau\geq 0$ and
$X_0 = x_0$.

The state $X_t$ represents the (CPA) constraint is should be positive at $t=\tau$ for the (CPA) constraint to be satisfied.
We use the Pontryagin Maximum Principle to argue that a necessary condition for an optimal control is that there exists a multiplier $p(t)$ so that 
$b(t)$ maximizes $H = p(\T\rev - \cost) +\rev$. 
Moreover $\dot{p} = -\frac{\partial H}{\partial x }=0$. We thus recover the main result of the static case: the bid is linear in $v$. Moreover
in the absence of stochastic components, the multiplier is constant.
Observe that in practice, one could use the solution of  problem (\ref{eq:MDP}) to implement a Model Predictive Control (MPC), using $x_0$ as the current state and thus get an online bidding engine.

\subsection{Stochastic Formulation}

 The number of displays is much higher than the number of clicks/sales, therefore we neglect the  randomness of  the competition/price over the randomness that comes from the \emph{actions}.
 Because of the number of displays involved, we argue that by the law of large number,
 the uncertainty on the action outcome can be apprehended  a white  noise.
 We thus add an additional, stochastic  term in the revenue:
$  \T\sigma_i\dd W_t$, where $W_t$ is a Brownian motion. 
We get  that an optimal  bid  $b$ should maximize (see \cite{pham2009continuous} for a reference)
\begin{equation}
\E_{b^-,v}  [b >b^-]((T v - b^-)p +  v  + \T^2 \sigma^2 M ),
\end{equation}
for some $p$ and $M$.

Since the realization of the action follows  a binomial law, $\sigma \propto v(1-v)$. 
 Assuming $v<<1$, we can approximate it as $v$.
Therefore  every infinitesimal terms of the Hamiltonian becomes
$
((T v - b^-)p +  v  + v M \T^2)[b>b^-]
$
reproducing the previous reasoning we get 
\begin{equation}
b^* = v (\T+\frac{M\T^2+1}{p}).
\end{equation}

\paragraph{Conclusion:} Once again, the bid factor approach is justified!
Observe that this conclusion comes from an approximation, (therefore the bid is not  strictly speaking optimal for the initial problem), but by doing so, we have reduced the control set from $\R^+\rightarrow\R^+$ to $\R^+$. 
This reduction gives access to  a well understood approach to solve this kind of problem:  the continuous time equivalent of Dynamic Programming.

Observe that our problem definition is incomplete: we need to take into account the constraint on the final state. 
We restate this constraint with a penalty $K(X)$.
Using the dynamic programming principle, the buyer problem is equivalent to the following Hamilton-Jacobi-Bellman equation:
\begin{equation}
 V_t + \sup_{\alpha} \left(
 (\T \rev(\alpha) - \cost(\alpha)) V_x+
\frac{\T^2}{2} \rev(\alpha)  V_{xx} +
\rev(\alpha)
\right) =0 
\end{equation}
\begin{equation}
\mbox{ and }  \quad V_\tau = K,
\end{equation}
with 
\begin{equation}
\rev(\alpha) = \rev(v\rightarrow \alpha v)  ,
\quad 
\mbox{and}
\end{equation}
\begin{equation}
\cost(\alpha) = \cost(v\rightarrow \alpha v) 
\end{equation}

\subsection{Numerical Resolution}

 Our aim here is to  illustrate that HJB approaches provide powerful  numerical and theoretical tools to model the buyer's preferences. A quantitative comparison of the performance of the stochastic method is out of the scope of this article.

We solve an instance of the problem with the numerical  optimal stochastic control solver BocopHJB \cite{bonnans:hal-01192610}.
BocopHJB computes the Bellman function by solving the non-linear partial differential equation (HJB).
Observe that the state is one dimensional, which makes this kind of solver very efficient.

We take $G(b^-) = (b^-)^{a}$ on $[0,1]$
and $v$ uniformly distributed on $[0,1]$.
We get that $\rev(\alpha) = \frac{\alpha^{a}}{a+2}$ if $\alpha <1$ and 
$\rev(\alpha)  = \frac{\alpha^{a}}{(a+2)(\alpha^{a+2})} + \frac{1}{2}(1- \frac{1}{\alpha^2})$ otherwise, 
similarly, $\cost(\alpha) = \frac{a\alpha^{a+1}}{(a+1)(a+2)}$ for $\alpha \leq 1$ and 
$\cost(\alpha) = \frac{a}{(a+1)(a+2)\alpha}((a+2)\alpha -a -1)$ for $\alpha \geq 1$ (see Figure \ref{fig:R}).
We take $\T=0.8$ and a linear penalty for negative  $X_\tau$. The result of a  simulated trajectories is displayed in Figure \ref{fig:simulation2}(a). 

We see that on this specific instance of the stochastic path, the control is decreased at the end of the time horizon to adapt to the sudden spike of CPA. This is the kind of 
behavior we can expect from a stochastic controller. By contrast,  a constant multiplier may result in the kind of trajectories displayed on Figure \ref{fig:simulation2} (b).

\begin{figure}%
    \centering
    \subfloat[a=1]{{\includegraphics[width=7cm]{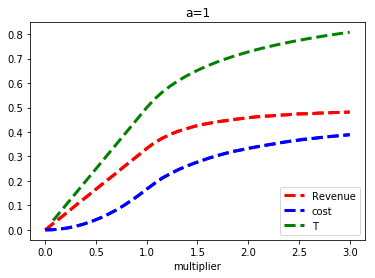} }}%

    \subfloat[a=5]{{\includegraphics[width=7cm]{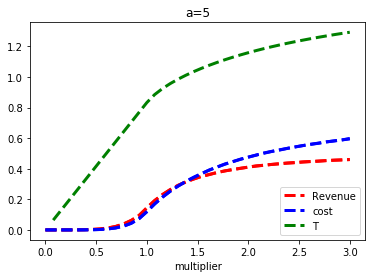} }}%
    \caption{$\rev$ and $\cost$ and empirical $\T$ for two values of $a$ }%
    \label{fig:R}%
\end{figure}

\begin{figure}%
    \centering
    \subfloat[stochastic controller]{{\includegraphics[width=7cm]{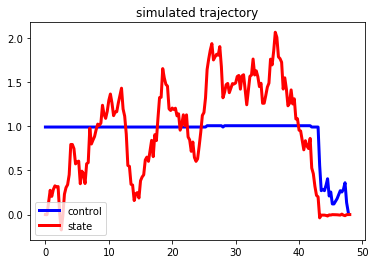} }}%

        \subfloat[deterministic controller]{{\includegraphics[width=7cm]{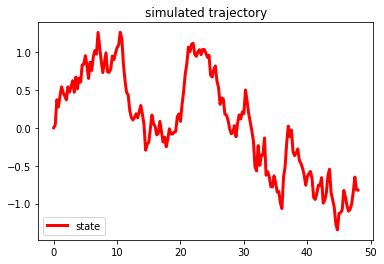}} }%
    \caption{Simulation result}%
    \label{fig:simulation2}%
\end{figure}

\section{Conclusion} 

We have formalized the existence of CPA constraints in the buyers bidding problem. 
These constraints are of paramount importance in performance markets, yet they tend to be neglected in the current literature. 
We have seen how standard results translate into this context, sometimes with surprising insights.
We also envisioned the dynamic, stochastic  setting by combining a first-order approximation with the Hamilton-Jacobi-Bellman approach. This allowed us in particular to derive a numerical method.

This work raises questions that  deserve more extended study. 
In particular, can we generalize the first three theorems to other constraints, auction rules and settings? 
Can the existence of the aggressive behaviors pointed out at the end of Section 5 pose  a problem to the market?
One may also want to put under scrutiny the harder, but more realistic case of correlated values. 
The dynamic case  also  deserves a specific study.

\begin{acks}

I would like to thank J\'er\'emie Mary  for his helpful comments.

\end{acks}

\bibliographystyle{plainnat}

\newpage 
\appendix
\section{Examples} 

\subsection{Power Law Case: derivation of the competition factor}
 We take $F(v) = v^a$ over $[0,1]$ with $a>0$. We get that $v^{(n)_1} = \frac{an}{an+1}$. Therefore we can write
$\frac{v_1^{(n-1)}}{v_1^{(n)}} =\frac{n-1}{n}\frac{an+1}{a(n-1)+1} $, 
which leads to 
 \begin{equation}
 \label{eq:gamma-power-case}
 \gamma(a,n) =(n-1) \left( \frac{an+1}{a(n-1)+1}-1  \right). 
 \end{equation}
 We observe that $\gamma $ is increasing in $n$ and $a$. It converges to 1 as  $n$ (or $a$) goes to infinity.
A plot of  $\gamma$ is displayed in Figure \ref{fig:example}(a). 
\paragraph{Remark:}
If we consider the 2 bidders case, and denote by $\alpha>0$ and $\beta>0$ their respective bid multipliers, then the payment and expected value of the first bidder are respectively equal to %
\begin{equation}
\pi = \beta (\frac{\alpha}{\beta})^{a+2} \frac{1}{(a+2)(2a+3)}, \mbox{ and }
V =  \frac{(\frac{\alpha}{\beta})^{a+1}}{(a+1)(2a+3)}. 
\end{equation}
We see that the CPA of the first bidder does not depend on the second bidder strategy. Therefore, we have in this case something similar to a strategy dominance. 
\subsection{Numerical  estimation of the competition factor}

The trick is  to remark that $\gamma$ is equal to the ratio of cost and value  when $\T =1$. We derive the following algorithm to estimate $\gamma$  for  different $F$ and $n$:
\begin{enumerate}
\item Choose $n$, $F$ and a number of samples $N$
\item $Cost  = 0$; $Value  = 0$;   
\item Generate $v\sim F$ and $v^- \sim F^n$
\item If $v>v^-$ then: $Cost+=v^-$ and $Value+=v$
\item Repeat from 3. (N-1) times
\item $\gamma \approx $\nicefrac{Cost}{Value}
\end{enumerate}
The results are displayed in Figure \ref{fig:example}.
For the log-normal  distribution, the results are similar to those of the power distribution: monotony and convergence to 1.
It seems that the bigger the variance, the bigger $\gamma$. Without providing a proof, we see that we can get an insightful intuition from Equation \ref{eq:new-gamma-def} of the next section.

The exponential distribution provides more exotic results: $\gamma$ seems to be only a function of $n$ (i.e. independent of the scale parameter).
This can be easily  proved analytically using the formula $\E(max(v_1,v_2))= \frac{1}{\lambda_1}+ \frac{1}{\lambda_1}- \frac{1}{\lambda_1+\lambda_2}$ and a recursion on $v_1^{(n)}$. 
It is however still increasing in $n$, and seems to converge to 1 as $n$ goes to infinity.

\begin{figure}%
    \centering
    \subfloat[Power Distribution]{{\includegraphics[width=7cm]{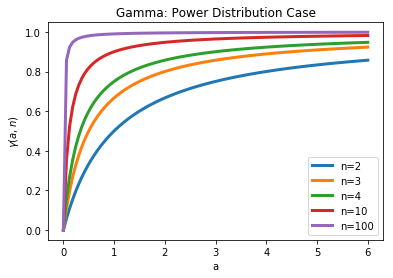} }}%
    //
    \subfloat[Exponential Distribution]{{\includegraphics[width=7cm]{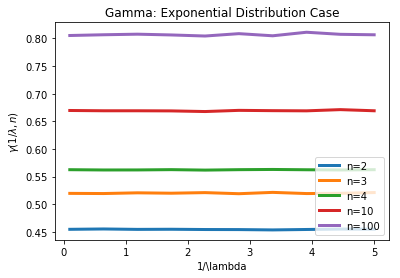} }}%
    \qquad
        \subfloat[LogNormal Distribution]{{\includegraphics[width=7cm]{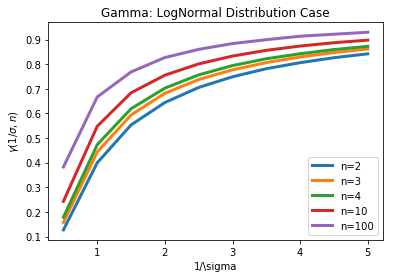} }}%
    \caption{The scaling factor as a function of the distribution parameter, for different numbers of bidders}%
    \label{fig:example}%
\end{figure}

\section{Proof of Theorem 3.1} 
Set $b(\alpha,v) = \alpha v$ for $\alpha>0$.
If  $b(\alpha,v)$ is admissible  for any $\alpha$, then we can  set $\lambda = 0$ and we are done (edge case).
Else, there exists $\alpha^+$ such that  
$\gamma(\alpha^+)>0$, where $\gamma(\alpha) :=\E_{v}[b^{-i}< b(\alpha,v)](b^{-i} - \T^i v) $

Moreover:
\begin{enumerate}
    \item since $\alpha \rightarrow\int_{b^{-i}/\alpha}^1 (b^{-i} - \T^i v)f^{i}(v)\dd v
$ is continuous for any $b^{-i}>0$, so is
$\gamma$,
\item for $\alpha$ smaller than $ T^{i}$, $\gamma(\alpha)\leq 0$.
\end{enumerate}   
So  there exists $\alpha^*$ satisfying
$\gamma(\alpha^*) = 0$.

If we set $\lambda^*$ so that $\frac{1}{\lambda^*} + T^{i} =\alpha^*$,
then by strategic dominance of the truthful strategy in  a second price auction, 
$b(\alpha^*,v)$ solves the unconstrained optimization problem of maximizing
\begin{equation}
\label{eq:intergand}
  \E_{b^-,v}[b^- <b](v - \lambda^* (b^- -  T^{i}v)).  
\end{equation}

We can also assert that  $b(\alpha^*,.)$ is also an optimal  solution  for the constrained optimization problem \eqref{eq:bidderProblem}.
Why? First observe that $b(\alpha^*,.)$ is admissible since $\gamma(\alpha^*) = 0$,
second, let $\hat{b}$ be an optimal solution to \eqref{eq:bidderProblem}, then, if we write $k(b) =  \E_{b^-,v}[b^- <b]v$, we have that  
\begin{equation}
k(b(\alpha^*)) - \lambda^* \gamma(\alpha^*)\geq k(\hat{b}) - \lambda^* \E_{v}[b^{-i}< \hat{b}](b^{-i} - \T^i v)
\end{equation}
thus since  $\gamma(\alpha^*) = 0$ and $\E_{v}[b^{-i}< \hat{b}](b^{-i} - \T^i v)\leq 0$
\begin{equation}
k(b(\alpha^*))\geq k(\hat{b}).
\end{equation}

\section{Proof of Theorem 3.3}
To derive the first order condition, we follow the same path as in the proof of Theorem \ref{theorem:equilibirum_strategy-general}.
We thus only  discuss the uniqueness. 
We start with a small technical consideration: observe that the bids can be modified on a zero measure set without impacting the profit. 
So when we say unique, we ,in fact, mean essentially unique. 

We denote by $b$ a symmetric equilibrium bidding strategy.
It induces a distribution $G$ of the price to beat. 
It is clear that we cannot be in the edge case  envisioned in the previous proof. 
So take $\alpha^*$  as introduced in the previous proof. 
Observe that to be a best reply $b$ has to maximize \eqref{eq:intergand}.
Assume there exists $v_0$ such that  $\alpha v \neq b(v)$ on a neighborhood $V$ of $v_0$.
Since $g(b(v))>0$ for $v\in V$, $\alpha v $ is strictly dominating $b(v)$ in the optimization of \eqref{eq:intergand}, which is a contradiction.

\section{Proof of Theorem 3.4} 

First observe that, the Lagrangian of the bidder optimization problems writes: \[ \lambda \E_{v,b^-}   \left((\nicefrac{1}{\lambda}+ \T ) v - S(b(v),b^{-})\right)[b>b^-],\] thus we can look for a point-wise maximizer. 
Denote by $(\hat{b},\hat{b}^-)$ the equilibrium bid of the same auction without any CPA constraints.
Observe that if  the price to beat is distributed like $(\nicefrac{1}{\lambda}+ \T ) \hat{b}^-(v^-)$ (for a given $\lambda>0$), then:
\begin{eqnarray*}
b^*(v) = \mbox{argmax}_b \E_{v^-}   (\nicefrac{1}{\lambda}+ \T ) v - S(b,b^{-}(v^-)) \\
 =(\nicefrac{1}{\lambda}+ \T )   \mbox{argmax}_b \E_{v^-}   (\nicefrac{1}{\lambda}+ \T ) v - S((\nicefrac{1}{\lambda}+ \T ) b,(\nicefrac{1}{\lambda}+ \T ) \hat{b}^-(v^-)) \\
 =(\nicefrac{1}{\lambda}+ \T )   \mbox{argmax}_b \E_{v^-}   v - S( b,\hat{b}^-(v^-))\\
 = (\nicefrac{1}{\lambda}+ \T ) \hat{b}(v).
\end{eqnarray*}
Therefore, for a given $\lambda$, if the competition uses the proposed linear bid, one should also bid linearly to maximize the Lagrangian. 
We denote by $G$ the price to beat distribution and by $\kappa$  the degree of "second priceness": $S(b,b^-) = ((1-\kappa)b + \kappa b^-)[b>b^-]$.
Observe that the first term of the integrand in the Lagrangian definition is equal to   $\E_v b_\lambda G(b_\lambda)$.
So, let us deal with the second term. Since

\begin{eqnarray*}
 \E_{v,v^-}  S(b_{\lambda}(v), b^-(v^-)) [v>v^-]
= \E_v \int_0^{b_\lambda(v)} S(b_\lambda(v),t)g(t) dt\\
= \E_v S(b_\lambda(v),b_\lambda(v)) G(b)  - \int_0^{b_\lambda(v)}\kappa G(t) dt \\
= \E_v b_\lambda G(b_\lambda) - \int_0^{b_\lambda(v)}\kappa G(t) dt, 
\end{eqnarray*}
we get that $$L =\lambda \kappa \E_v \int_0^{b_\lambda(v)} G(t) dt. $$

The first order condition on $\lambda$ writes
\[
0 = \E_{v,b^-}   \left((\nicefrac{1}{\lambda}+ \T ) v - S(b(v),b^{-})\right)[b>b^-] -\nicefrac{\kappa}{\lambda}  \E_v  G(b_\lambda(v)) v \dd v.
\]
Then using successively the homogeneity of $S$, a simplification by $\kappa$, and the de facto definition of $\gamma$ as $\nicefrac{\lambda\T}{1+\lambda \T}$, we get  
\[\gamma = \frac{\E v^- [v>v^-]}{\E v [v>v^-]},
\]
which we simplify using the following formula on the order statistics (see \cite{krishna2009auction}): $\E Y^{(n)}_2 =\E n Y_1^{(n-1)} - (n-1)Y_1^{(n)}$.

\end{document}